\documentclass[lettersize,journal]{IEEEtran}
\usepackage{algorithm}
\usepackage{algorithmic}
\usepackage{array}
\usepackage[caption=false,font=normalsize,labelfont=sf,textfont=sf]{subfig}
\usepackage{textcomp}
\usepackage{stfloats}
\usepackage{url}
\usepackage{verbatim}
\usepackage{graphicx}
\usepackage[nospace]{cite}
\usepackage{amsfonts}
\usepackage[fleqn]{amsmath}
\usepackage{array}
\usepackage[caption=false,font=normalsize,labelfont=sf,textfont=sf]{subfig}
\usepackage{textcomp}
\usepackage{stfloats}
\usepackage{multirow}
\usepackage{multicol}
\usepackage{url}
\usepackage{verbatim}
\usepackage{graphicx}
\usepackage{rotating}
\usepackage{tabularray}
\usepackage{pbox}
\usepackage{makecell}
\usepackage{footnote}
\usepackage{nicematrix}
\usepackage{booktabs}     
\usepackage{graphicx}     
\usepackage{booktabs}
\usepackage{siunitx}
\usepackage{multirow}
\usepackage{comment}

\usepackage{array}
\allowdisplaybreaks
\newtheorem{prop}{Proposition}

\newtheorem{proof}{Proof}

\begin{document}
\bstctlcite{BSTcontrol}
\title{Contracting for Long-Duration Energy Storage in Incomplete Risk Markets}

\author{{Adam Suski, Elina Spyrou, Jacob Mays, Richard Green}
\thanks{Adam Suski and Elina Spyrou are with the Department of Electrical and Electronic Engineering, Imperial College London. (e-mail: a.suski23@imperial.ac.uk). Jacob Mays is with the  School of Civil and Environmental Engineering, Cornell University. Richard Green is with the Department of Economics \& Public Policy, Imperial Business School.}
\thanks{This work was funded by the Taylor Donation from the Grantham Institute and Energy Futures Lab, and Global Fellows Fund from Imperial College London,  the Leverhulme International Professorship with grant reference LIP-2020-002, and the Engineering and Physical Sciences Research Council under the grant EP/Y025946/1 (Electric Power Innovation for a Carbon-free Society (EPICS)).}}

\maketitle

\begin{abstract}
Long-duration energy storage (LDES) is increasingly regarded as essential for reliability in decarbonized power systems. To encourage investment, policymakers introduce contracts, such as cap-and-floor schemes. So far, these schemes have only been evaluated using exogenous revenue or price distributions. This paper develops a two-stage stochastic equilibrium model to evaluate how LDES cap-and-floor design affects investment and market outcomes. This model endogenously captures the interactions among contract design, investment capacity, and cost of capital. Results for a stylized Great Britain case study show that market incompleteness substantially suppresses LDES investment. Centrally administered zero-premium contracts can restore the risk-neutral investment level by reducing downside risk, but doing so requires substantial expected transfers from consumers to investors and produces outcomes that are sensitive to the cap, floor, and sharing parameters. Bilaterally negotiated contracts largely eliminate expected transfers and reduce sensitivity to those parameters, but provide weaker investment incentives. 

To balance investment incentives, transfers, and social welfare, policymakers should jointly consider contract and institutional design. 

\end{abstract}

\begin{IEEEkeywords}
Long-duration energy storage, contracts, electricity markets, investment equilibrium, risk hedging
\end{IEEEkeywords}

\section{Introduction
}
\IEEEPARstart{L}{ong}-duration energy storage (LDES) is increasingly recognized as a cornerstone of deeply decarbonized power systems \cite{kittel2026long,AFRY_EURELECTRIC,levin2026deployment}.
LDES can provide a range of valuable system services. In particular, it is uniquely suited to applications requiring long discharge durations, such as intertemporal energy arbitrage across days and the provision of resource adequacy services \cite{cheng_electricity_2025}. As a result, the investment case for LDES is expected to rely primarily on wholesale market revenues from medium- to long-term energy arbitrage, complemented by a smaller but relatively stable revenue stream from resource adequacy services \cite{LCP_REGEN_SCENARIO}.

Although simulations suggest that LDES projects recover their costs in expectation, their revenue distributions are highly skewed \cite{suski_missing_2025, billimoria2026caps }. In the absence of sufficiently liquid markets for long-term hedging instruments, this interannual variability in revenues can substantially increase the cost of capital and deter investment \cite{makrides_quantifying_2025}. Markets for long-term hedging instruments remain limited, particularly in Europe and the UK \cite{newbery_missing_2016, Battle2023}.

To address this market incompleteness, several policymakers have begun introducing technology-specific contract-based mechanisms, consistent with the hybrid market proposals of Roques et al. \cite{roques_adapting_2017} and Joskow \cite{joskow_hierarchies_2022}. In the UK, this has taken the form of a Cap-and-Floor (C\&F) scheme for net revenues \cite{department_for_energy_security__net_zero_long_2024}, modeled after the regulatory framework for electricity interconnectors \cite{ofgem_cap_2016}. Australia has initiated competitive tenders under the Long-Term Energy Service Agreement (LTESA) model \cite{nsw_department_of_planning_industry_and_environment_long-term_2021}, and similar programs are emerging in New York \cite{new_york_state_energy_research_and_development_authority_bulk_2026}. Although these mechanisms differ in structure, they share a common goal: facilitating risk sharing between consumers and investors to ensure that socially beneficial capacity investments are realized. 

Despite the growing adoption of such mechanisms, quantitative evidence remains limited regarding three questions central to their design. First, how effective is the C\&F mechanism in reducing investment risk and stimulating LDES deployment? Second, how are the resulting costs and benefits distributed between consumers and investors? Third, which contract design choices best balance investment incentives, consumer outcomes, and system efficiency?\cite{spyrou_designing_2026}.

Existing studies provide only partial answers. Assessments of system-wide benefits have largely relied on central planner models, which estimate the value of additional LDES capacity but do not capture investor behavior under risk \cite{afry_benefits_2022,LCP_REGEN_SCENARIO}. 

A growing literature has investigated contract-based support mechanisms for storage under uncertainty. Mays and Jenkins \cite{mays_financial_2023} show that revenue put options can reduce the implied cost of capital for short-duration storage. Berdin et al. \cite{Berdin_2026} demonstrate that capacity remuneration mechanisms can restore investment incentives for long-duration hydrogen storage, particularly when combined with renewable support policies. Suski et al. \cite{suski_comparing_2026} compare several storage support mechanisms, including a floor-only contract.\footnote{Although the paper refers to a Cap-and-Floor contract, the cap is so high that the simulated mechanism is effectively equivalent to a floor-only contract.} 

Turning to the literature on Cap-and-Floor contract design, existing work employs numerous scenarios to simulate gross margins or prices \cite{cepa_cap_2025, billimoria2026caps}. This overlooks an important feedback mechanism: contract design shapes investment decisions, which in turn influence market outcomes and thus the revenues that the contracts are intended to insure.

Another overlooked design dimension concerns the institutional allocation of contracting responsibilities.  In Great Britain, Cap-and-Floor contracts are administered centrally by the energy regulator, with candidate projects selected with input from the National Energy System Operator. In other jurisdictions, however, long-term contracts are often procured through decentralized competitive auctions in which suppliers negotiate directly with project developers on behalf of consumers \cite{moreno_auction_2010}. Understanding whether decentralized contracting can replicate the outcomes of a centrally administered mechanism is therefore an important policy question.

To address these gaps in the literature, we construct a two-stage stochastic equilibrium model of investment under risk market incompleteness, building on the framework originally developed by Mays et al. \cite{mays_asymmetric_2019}. The model jointly determines investment, electricity market equilibrium, and cap-and-floor contract pricing under market incompleteness. We apply the model for a stylized future power system in Great Britain.

The framework enables us to address the three questions posed above. First, we quantify the extent to which Cap-and-Floor contracts mitigate investment risk and stimulate LDES deployment. Second, we compare the distribution of costs and benefits under two institutional arrangements: a centrally administered scheme offering zero-premium contracts and a decentralized market in which consumers negotiate contracts competitively with LDES investors. Third, we identify design principles for Cap-and-Floor mechanisms that balance investment incentives, consumer protection, and overall system efficiency.

Methodologically, the paper extends the stochastic equilibrium framework of Mays et al. \cite{mays_asymmetric_2019} by endogenizing Cap-and-Floor contract pricing and incorporating alternative institutional arrangements for contract allocation. More broadly, it provides one of the first equilibrium assessments of revenue-stabilization mechanisms for LDES, offering policy insights into the effectiveness and design of Cap-and-Floor contracts in electricity markets characterized by investment risk and incomplete risk markets.

The remainder of the paper is organized as follows. Section~\ref{SEC: Methodlogy} presents the modeling framework and solution approach; Section~\ref{sec:StorageContracts} defines the LDES contract designs; Section~\ref{sec: solution} describes the solution method; Section~\ref{SEC:CaseStudySetup} presents the Great Britain case study; Section~\ref{SEC:Results} reports results; and Section~\ref{SEC:Conclusions} summarizes the conclusions and suggests areas for further research.

\section{Modelling methodology} \label{SEC: Methodlogy}

In this section, we present a two-stage stochastic investment equilibrium model and its solution method. This model is an adapted version of the framework in \cite{mays_asymmetric_2019},  further applied and extended, among others, in \cite{mays_financial_2023, shu_beyond_2023}. 

The model represents interactions among agents \( a \in \mathcal{A} \), with consumers \( c \in \mathcal{C} \), generators \( g \in \mathcal{G} \), and storage units \( s \in \mathcal{S} \). The latter two are collectively denoted as resources \( r \in \mathcal{R} = \mathcal{G} \cup \mathcal{S} \). The model also includes a set of hedging contracts \( \mathcal{K} \) that agents can buy and sell.  Uncertainty is modeled by a set of scenarios \( \Omega \), where each scenario \( \omega \in \Omega \) is associated with a probability \( p_\omega \), and describes realizations of exogenous inputs in the second stage. 

In the first stage, each investor determines investments and the volumes of hedging contracts to maximize its risk-adjusted utility, defined by a coherent risk measure $\rho_a$. Following standard practice, we use a convex combination of the expected value and CVaR of scenario-specific (production or consumption) surpluses \cite{rockafellar_optimization_2000}. In the second stage, a model that maximizes market surplus is solved to represent the competitive dispatch outcome and determine electricity prices and dispatch levels for each scenario \( \omega \). Second-stage results are also used to determine the scenario-specific contract payouts. 

\subsection{Second-stage problem} \label{sec:Dispatch}

The model $\text{DIS}_\omega$ in \eqref{eq:DIS} maximizes short-run market surplus $U_\omega$ by choosing the dispatch and consumption variables $\Gamma^{D}=\{d^{\text{fix}}_{\omega t},d^{\text{flex}}_{\omega t},q_{\omega t g},q^{\text{ch}}_{\omega t s},q^{\text{dis}}_{\omega t s},e_{\omega t s}\}$. The objective \eqref{eq:DIS_obj} aggregates (weighted by segment length $l_t$) consumer utility net of variable generation costs $C_g^V$, with period utility $b_{\omega t}$ defined by the willingness-to-pay function in \eqref{eq:DIS_utility}. Demand comprises a fixed segment $d^{\text{fix}}_{\omega t}$ up to $D^{\text{fix}}_{\omega t}$ valued at $B$ and a flexible segment $d^{\text{flex}}_{\omega t}$ up to $D^{\text{flex}}_{\omega t}$ with linearly decreasing marginal value. Feasibility is enforced by the power balance constraint \eqref{eq:DIS_balance}, generation ($q_{\omega t g}$) and storage ($q^{\text{ch}}_{\omega t s}$ and $q^{\text{dis}}_{\omega t s}$) power capacity limits \eqref{eq:DIS_gen}--\eqref{eq:DIS_power}, and storage state of charge $e_{\omega t s}$ dynamics \eqref{eq:DIS_soc} with energy bounds \eqref{eq:DIS_energy} enforcing energy-to-power ratio $\theta_s$. This is a convex problem, where dual variable $\lambda_{\omega t}$ represents the competitive wholesale electricity price \cite{steven_a_gabriel_complementarity_2013}.
\begin{subequations}
\label{eq:DIS}
\begin{gather}
\text{DIS}_\omega:\notag\\
\max_{\Gamma^{D}} \;
U_\omega
=
\sum_{t\in\mathcal{T}} l_t
\left(
b_{\omega t}
-
\sum_{g\in\mathcal{G}} C_g^V q_{\omega t g}
\right)
\label{eq:DIS_obj}\\
b_{\omega t}
=
B\!\left(
d^{\text{fix}}_{\omega t}
+
d^{\text{flex}}_{\omega t}
-
\frac{(d^{\text{flex}}_{\omega t})^2}{2D^{\text{flex}}_{\omega t}}
\right)
\quad \forall t
\label{eq:DIS_utility}\\
\begin{split}
\sum_{g\in\mathcal{G}} q_{\omega t g}
+
\sum_{s\in\mathcal{S}}
( q^{\text{dis}}_{\omega t s}-q^{\text{ch}}_{\omega t s})
&=
d^{\text{fix}}_{\omega t}
+
d^{\text{flex}}_{\omega t}
\\
&\forall \omega,t
\hspace{0.3cm}(\lambda_{\omega t})
\end{split}
\label{eq:DIS_balance}\\
0 \le d^{\text{fix}}_{\omega t} \le D^{\text{fix}}_{\omega t},
\quad
0 \le d^{\text{flex}}_{\omega t} \le D^{\text{flex}}_{\omega t}
\quad \forall \omega t
\label{eq:DIS_demand}\\
0 \le q_{\omega t g} \le c_g A_{\omega t g}
\quad \forall t,g
\hspace{0.3cm}(\mu_{\omega t g})
\label{eq:DIS_gen}\\
0 \le q^{\text{ch}}_{\omega t s} \le c_s,
\quad
0 \le q^{\text{dis}}_{\omega t s} \le c_s
\quad \forall t,s
\hspace{0.2cm}
(\upsilon^{\text{ch}}_{\omega t s},\upsilon^{\text{dis}}_{\omega t s})
\label{eq:DIS_power}\\
e_{\omega t s}
=
e_{\omega,t-1,s}
+
l_t
\left(
\eta^{\text{ch}}_s q^{\text{ch}}_{\omega t s}
-
\frac{q^{\text{dis}}_{\omega t s}}{\eta^{\text{dis}}_s}
\right)
\quad \forall t,s
\label{eq:DIS_soc}\\
0 \le e_{\omega t s} \le c_s \theta_s
\quad \forall t,s
\hspace{0.3cm}(\xi_{\omega t s}).
\label{eq:DIS_energy}
\end{gather}
\end{subequations}

\subsection{First-stage problem for investors} \label{sec:resources}

In the first stage, each resource $r\in\mathcal{R}$ determines investment $c_r$ and contracting $\nu_{rk}$ decisions by solving the problem $\text{INV}_r$ in \eqref{eq:GENg}. The corresponding first-stage decision variables are collected in the set
$\Gamma^{I}=\{c_r,\nu_{rk},\zeta_r,u_{\omega r},u^{+}_{\omega r}\}$, where all variables in $\Gamma^{I}$ are subject to the bounds specified in \eqref{eq:Investor4}--\eqref{eq:Investor5}. Each resource maximizes a risk-adjusted surplus \eqref{eq:Investor1}, defined as a convex combination of the expected value and the CVaR. Risk preferences are governed by parameter $\delta_r\in[0,1]$, which weights the expected value, and by $\psi_r$, which determines the tail probability at which CVaR is evaluated.

Equation \eqref{eq:Investor2} defines the scenario-specific surplus as a function of market net revenues, investment costs, and contract positions, where contract revenues depend on volumes $\nu_{rk}$, prices $\phi_k$, and payouts $\kappa_{\omega k}$. Since revenues and costs enter linearly in installed capacity, the value $\rho_r$ coincides with the marginal risk-adjusted profit of capacity expansion. Equation \eqref{eq:Investor3} implements the standard linear formulation of CVaR by introducing the Value-at-Risk (VaR) variable $\zeta_r$ and auxiliary variables $u^{+}_{\omega r}$, while \eqref{eq:Investor4} bounds contract volumes.
\begin{subequations}
\label{eq:GENg}
\begin{gather}
\text{INV}_r: \notag\\
\max_{\Gamma^{I}} \
\rho_r = (1 - \delta_r) \left( \zeta_r - \frac{1}{\psi_r} \sum_{\omega} p_\omega u^{+}_{\omega r} \right) \notag\\
+ \delta_r \sum_{\omega} p_\omega u_{\omega r}   \label{eq:Investor1} \\
u_{\omega r} = \pi_{\omega r} c_r - C_r^{\text{I}}c_r -\sum_k\nu_{r k}(\phi_k-\kappa_{\omega k}) \quad \forall \omega \label{eq:Investor2}\\
\zeta_r - u_{\omega r} \leq u^{+}_{\omega r}  \quad \forall \omega \label{eq:Investor3}\\ 
\underline{\nu}_{rk}\leq \nu_{r k} \leq \overline{\nu}_{rk} \quad \forall k \label{eq:Investor4}\\
0 \leq u^{+}_{\omega r} \quad \forall \omega. \label{eq:Investor5}
\end{gather}
\end{subequations}

Scenario-specific net revenues $\pi_{\omega r}$ are obtained from the dual variables of the second-stage dispatch problem. Using the duality of the market-clearing model, normalized profits of generators and storage resources are represented as functions of scarcity rents \cite{ehrenmann_generation_2011}, and are defined as follows:
\begin{gather}
\pi_{g \omega} = \sum_{t} \mu_{\omega t g} A_{\omega t g} \quad \forall g,  \omega \\
\pi_{s \omega} = \sum_{t}\left( \upsilon_{\omega t s}^{dis} + \upsilon_{\omega t s}^{ch} + \xi_{\omega t s}\theta_s \right) \quad \forall s, \omega
\end{gather}

\subsection{First-stage problem for consumers}\label{sec: consumers}

In the first stage, consumers decide the volume of hedging contracts to support by solving the problem $\text{CON}_c$ in \eqref{eq:CON}. Assuming perfect competition on the demand side, we model a single representative consumer. The first-stage decision variables of the consumer are collected in the set
$\Gamma^{C}=\{\nu_{ck},\zeta_c,u_{\omega c},u^{+}_{\omega c}\}$, where all variables in $\Gamma^{C}$ are subject to the bounds specified in \eqref{eq:Consumer4}--\eqref{eq:Consumer5}. The consumer maximizes a risk-adjusted measure of surplus \eqref{eq:Consumer1}, defined as a convex combination of the expected value and the CVaR of scenario-specific surplus outcomes. Risk preferences are governed by parameters $\delta_c$ and $\psi_c$, which weight the expected value and determine the CVaR tail probability, respectively.

Scenario-specific consumer surplus is defined in \eqref{eq:Consumer2} as consumption utility net of electricity procurement costs and net contract payments, where contract revenues depend on volumes $\nu_{ck}$, prices $\phi_k$, and payouts $\kappa_{\omega k}$. Equation \eqref{eq:Consumer3} implements the standard linear formulation of CVaR by introducing the VaR variable $\zeta_c$ and the auxiliary variable $u^{+}_{\omega c}$.
\begin{subequations}
\label{eq:CON}
\begin{gather}
\text{CON}_c: \notag\\
\max_{\Gamma^{C}} \
\rho_c = (1 - \delta_c) \left( \zeta_c - \frac{1}{\psi_c} \sum_{\omega} p_\omega u^{+}_{\omega c} \right) \notag\\
+ \delta_c \sum_{\omega} p_\omega u_{\omega c}  \label{eq:Consumer1} \\
u_{\omega c} = \sum_{t} l_{t} (b_{\omega t}
- \lambda_{\omega t}\left(d_{\omega  t}^{\text{fix}} + d_{\omega  t}^{\text{flex}}\right) \notag\\ 
+ \sum_k\nu_{c k}(\phi_k-\kappa_{\omega k}) \quad \forall \omega \label{eq:Consumer2}\\ 
\zeta_c - u_{\omega c} \leq u^{+}_{\omega c}  \quad \forall \omega \label{eq:Consumer3}\\ 
\underline{\nu}_{ck}\leq \nu_{c k} \leq \overline{\nu}_{ck} \quad \forall k \label{eq:Consumer4}\\
0 \leq u^{+}_{\omega c} \quad \forall \omega. \label{eq:Consumer5}
\end{gather}
\end{subequations}

Total annual demand differs across scenarios because scenario-specific demand profiles determine achievable consumer surplus, potentially biasing the distribution's lower tail toward low-demand scenarios. The consumer surplus achieved when all demand is satisfied varies across scenarios because the demand parameters (i.e., $D^{fix}_{\omega,t}$) differ across scenarios. To ensure a fair comparison of consumer surplus across scenarios, we adjust the consumer probabilities as described in \eqref{EQ: Normalization}. Define
\begin{equation} \label{EQ: Normalization}
\bar{D}_{\omega}^{\mathrm{fix}} := \sum_{t\in\mathcal{T}} l_t\, D_{\omega t}^{\mathrm{fix}},
\; 
\bar{D}^{\mathrm{fix}} := \sum_{\omega\in\Omega} p_\omega\, \bar{D}_{\omega}^{\mathrm{fix}} ,
\; 
p_\omega' := p_\omega \frac{\bar{D}_{\omega}^{\mathrm{fix}}}{\bar{D}^{\mathrm{fix}}}
\end{equation}
with $p_\omega'$ being the adjusted weight, so that tail-risk assessments are not driven by variation in total annual demand across scenarios. In our model $p_\omega'$ replaces $p_\omega$ in \eqref{eq:Consumer1}. 

Alternatively, following~\cite{mays_asymmetric_2019}, consumer surplus could be measured relative to a full-service benchmark, or more generally recentered so that the maximum achievable consumer surplus is the same across years.

\subsection{Contractual regimes}\label{sec: Contractual_regimes}

We consider two contractual regimes. The payout functions are exogenously specified in both cases. We start with \emph{zero-premium} contracts, which might be imposed by a regulator seeking to avoid the welfare losses associated with incomplete risk markets. This regime captures policy intervention of a regulator (or equivalent state agency) motivated by the welfare losses associated with incomplete risk markets and suppressed investment, in the spirit of Laffont and Tirole \cite{laffont_using_1986}. However, depending on the payout function, a zero-premium contract might involve significant expected transfers between agents. In commercial settings, agents normally require an up-front premium in return for accepting such an "out of the money" contract, and so our second \emph{negotiated} regime allows this premium $\phi_k$ to be determined endogenously through competitive interaction between investors and consumers.

\section{LDES-Dedicated Contracts} \label{sec:StorageContracts}
We focus on C\&F contracts under alternative parametrizations proposed in Great Britain \cite{ofgem_long_2025}. However, our methodology can investigate a range of alternative contracts proposed in the academic and institutional literature, such as spread CfDs \cite{mastropietro_taxonomy_2024}, revenue put options in Australia \cite{nsw_department_of_planning_industry_and_environment_long-term_2021}, or the Index Storage Credit Mechanism proposed in New York \cite{new_york_state_energy_research_and_development_authority_over_2024}. 

The scenario-specific payouts of contracts, denoted by \(\kappa_{\omega k}\), are calculated based on the second-stage dispatch outcomes. When \(\kappa_{\omega k}\) is positive, money is transferred from consumers to investors, and in the reverse direction when \(\kappa_{\omega k}\) is negative. Each contract type is characterized by a payout function that takes as inputs the outputs of the second-stage dispatch problem, \(\text{DIS}_\omega\), and provides \(\kappa_{\omega k}\) in value per MW signed.

C\&F is a mechanism that provides a multi-year contract during which the project's net revenues are subject to predefined caps and floors. The floor guarantees a minimum revenue for investors, thereby encouraging investment in qualifying assets. Conversely, the cap allocates excess returns to consumers, safeguarding them from elevated electricity bills. We formulate it after the scheme rolled out in the UK \cite{ofgem_long_2025}.  In a general form, with normalized (i.e., per MW installed) cap and floor defined as $\overline{\pi}_r$ and $\underline{\pi}_r$ respectively, the payout is calculated as

\begin{gather}
    \kappa_{\omega,r} = (1-\alpha^F) \cdot \max(0,\underline{\pi}_r - \pi_{\omega,r}) \notag\\ - (1 - \alpha^C) \cdot \max(0, \pi_{\omega,r} - \overline{\pi}_r).
    \label{EQ: CapAndFloorPayoff}
\end{gather}
where $\alpha^C,\alpha^F  \in [0,1]$ are factors determining how returns short of the floor and in excess of the cap will be shared between consumers and LDES. When these factors are 0, the cap and floor are called hard. Conversely, when their value is non-zero, the cap or floor is called soft. 

Choosing parameter values for the C\&F is challenging. The regulator's goal in setting cap and floor levels is to ensure investability for LDES by guaranteeing cost recovery with a return at the floor, without creating undue risks for consumers and without distorting dispatch incentives. The desire to preserve dispatch incentives explains non-zero values for the sharing factors ($\alpha^C,\alpha^F$).

\subsubsection{Determining the floor and cap levels}
Both the floor and the cap can be set with reference to technology-specific capital costs. Let $\text{CRF}(\cdot)$ denote the capital recovery factor, which annualizes investment costs given a discount rate and lifetime. For an asset $r$ with specific investment cost $C_r^I$ and lifetime $l_r$, a generic cap/floor level can be expressed as
\begin{equation} \label{eq: FloorLevel}
\pi_r^{\text{level}} = C_r^I \cdot \text{CRF}(\iota_r, l_r),
\end{equation}
where $\iota_r$ is the rate used to compute the annualized cost (e.g., a WACC-based or policy-defined rate of return). Under risk neutrality, the long-run average net revenue would cover $C_r^I \cdot \text{CRF}(\text{WACC}_r,l_r)$; therefore, plausible floor levels should be below this benchmark and plausible cap levels above it.\footnote{UK regulatory discussion and supporting analysis \cite{cepa_cap_2025} highlighted three floor specifications that provide identical protection to debt service but differ in the degree of equity protection. \emph{ (Debt service only):} guarantees debt service but provides no explicit protection to equity. \emph{ (Debt service + equity capital recovery):} guarantees debt service and recovery of equity principal. \emph{ (Debt-like return on total capital):} treats equity analogously to debt by applying the same recovery logic to total capital. In Section \ref{SEC:CaseStudySetup}, we highlight what these levels correspond to in terms of values in \eqref{eq: FloorLevel}, and in Section \ref{SEC:Results}, we run a range of floors that encompass those. 
}

\section{Solution method} \label{sec: solution}

Problems $\text{DIS}_\omega$, $\text{INV}_r$, and $\text{CON}_c$, defined for all scenarios $\omega \in \Omega$, resources $r \in \mathcal{R}$, and consumers $c \in \mathcal{C}$, jointly define the equilibrium problem $\text{EQ}$. Formally, an equilibrium is characterized by the following two key conditions:
\begin{gather}
0\leq c_r \perp -\rho_r \geq 0, \quad \forall r \in \mathcal{R} \label{EQ:InvestmentCondition}\\
\sum_{a\in \mathcal{A}}\nu_{ak}=0, \quad \forall k\in\mathcal{K} \label{EQ:ContractCondition}
\end{gather}

Equation \eqref{EQ:InvestmentCondition} states that investment in the specific capacity will continue until risk-adjusted profit is zero \cite{ehrenmann_generation_2011}. Equation \eqref{EQ:ContractCondition} defines the balance of all contractual trades for all participants and contract types \cite{shu_beyond_2023}. 

When individual agents’ problems are convex and satisfy constraint qualifications, the equilibrium system can be reformulated as a mixed-complementarity problem (MCP) via the Karush–Kuhn–Tucker (KKT) conditions \cite{steven_a_gabriel_complementarity_2013} and solved using Newton-type solvers such as PATH. However, this approach is computationally tractable only for very small instances \cite{hoschle_admm-based_2018}. Under certain conditions \cite{steven_a_gabriel_complementarity_2013}, these problems can also be formulated as an equivalent centralized optimization model and easily solved by off-the-shelf solvers. In the case of incomplete risk-averse equilibria, the straightforward centralized convex problem does not exist, due to the non-monotone mapping of players' strategies ~\cite{abada_multiplicity_2017}. As such, alternative methods are required to solve such problems at scale. 

\subsection{Benchmark algorithm} \label{sec: benchmark_algorithm}

The foundation of our work is the algorithmic equilibrium search method, a variant of the Gauss-Seidel and ADMM algorithms, developed in the original work by Mays et al. \cite{mays_asymmetric_2019}. The algorithm iterates between the first and second stages, gradually adjusting the invested capacities based on investors' risk measures, and settles contract prices based on volume imbalances. The algorithm terminates when the contracts are balanced (i.e., condition \eqref{EQ:ContractCondition} is met) and the risk-adjusted profits of all investors are near zero (i.e., condition \eqref{EQ:InvestmentCondition} is met). 

\subsection{Simplifications and solution method}

The algorithm in \cite{mays_asymmetric_2019} can accommodate multiple contract types and investors, as further demonstrated by \cite{shu_beyond_2023}. To simplify the equilibrium structure and reflect the institutional design of LDES support mechanisms, we impose two simplifying assumptions.

\textit{Assumption 1 (Single-technology contracting).}
The contract set $\mathcal{K}$ is a singleton, i.e., $|\mathcal{K}| = 1$, and contract trading is permitted only between the single investor and consumers.

\textit{Assumption 2 (Restricted contract volumes).}
Contract volumes are restricted to corner points, $\nu_{rk} \in \{0, c_r\}$, $\forall r \in \mathcal{R}, k \in \mathcal{K}$, where $c_r$ denotes the installed capacity of the resource under contract.

Under Assumption 1, investment decisions for all candidate technologies other than LDES depend only on the equilibrium LDES capacity and are independent of the contract design used to support that capacity. Based on this observation, we solve a stochastic risk-neutral capacity expansion problem over a discrete grid of LDES capacities to determine the corresponding equilibrium investments across all non-LDES technologies.
We then use the resulting energy prices to evaluate contract payoffs and risk measures. As validation, the original iterative equilibrium search from \cite{mays_asymmetric_2019} is applied to a subset of cases, yielding identical capacity outcomes.

In terms of Assumption 2, we note that interior optima cannot be ruled out under continuous $\nu \in [0, c_r]$, as tail reordering with increasing volume can cause the marginal value of contracting to change sign before reaching $c_r$. The assumption is, therefore, both an institutional simplification, reflecting that LDES support schemes are often structured as full hedges, and a computational one, as it enables a tractable grid search over a wide range of contracts and risk parameters. Its empirical validity is confirmed by running the algorithm of \cite{mays_asymmetric_2019} with $\nu$ unconstrained in the range $[0, c_r]$, where we did not observe any intermediate optimal volumes.

\subsection{Mutually acceptable contract premiums}

For a given capacity mix, under Assumptions 1 and 2, acceptable premiums for both sides can be analytically defined for a given capacity mix.\footnote{Another interpretation of this derivation may refer to rational post-investment decisions when the investment costs are sunk, and impacts of contracts on the capacity available in the market are accounted for. In this case, the acceptable premiums depend on the contract payouts, under a risk-adjusted probability measure that considers the overall surplus/profit.} Define $\rho_a^{0}$ as the agent's risk measure without a contract, and $\rho_a^{*}$ after adjusting to contract payouts at full contracted capacity (but no up-front premium). Since the risk measure $\rho$ is assumed to be coherent, it has a translation-invariant property, and with a constant premium across scenarios $\Omega$, its risk changes by that exact amount. Then investors require:
\begin{gather}
\rho_r^{0} \leq  \rho_r^{*} - \nu_k\phi_k \Rightarrow \Bar{\phi}_{rk} = \frac{\rho_r^{*} - \rho_r^{0}}{\nu_k} \quad \forall k\in\mathcal{K},
\label{EQ:PriceThesholdsInvestor}
\end{gather}
where $\Bar{\phi}_{rk}$ is the maximum premium investor $r$ is willing to pay. Similarly, for consumers, yet with the opposite sign: 
\begin{gather}
\rho_c^{0} \leq  \rho_c^{*} + \nu_k\phi_k \Rightarrow \underline{\phi}_{ck} = \frac{\rho_c^{0} - \rho_c^{*}}{\nu_k} \quad \forall k\in\mathcal{K},
\label{EQ:PriceThesholdsConsumer}
\end{gather}
where $\underline{\phi}_{ck}$ is a minimum premium consumer $c$ is willing to accept. For the range of acceptable premiums to exist, we require $\underline{\phi}_{ck}\leq\Bar{\phi}_{rk}$. Negative premiums are possible if payouts are skewed towards consumer benefit and consumers are willing to pay for the contract. If this inequality does not hold, we assume that there is no agreeable premium and the contract cannot be signed ($\nu_{ak}=0$).

Under Assumptions 1 and 2, \eqref{EQ:PriceThesholdsInvestor}\textendash \eqref{EQ:PriceThesholdsConsumer} are analytical bounds for the range of contract premiums found by the inner loop (ADMM-like) of the Algorithm in \cite{mays_asymmetric_2019}. Therefore, multiple values for contract premiums are possible. In practice, the level of contract premium will be negotiated and 
will primarily depend on the agents' bargaining power. For most of our results, we choose a premium at the center (i.e., $(\underline{\phi}_{ck} +\Bar{\phi}_{rk})/2$) of the range of mutually acceptable contract premiums (i.e. $[\underline{\phi}_{ck},\Bar{\phi}_{rk}$]). We show in Appendix 1 that this solution corresponds to the Nash Bargaining solution \cite{nash_bargaining_1950}. 

\section{Case study} \label{SEC:CaseStudySetup}

The case study represents a simplified GB system in 2035. Existing unabated gas and nuclear are set to 35 GW and 5 GW, respectively. The model invests in renewable and storage technologies. Storage options include a 2-hour battery and a 12-hour LDES. Renewable options include solar PV, onshore wind, and offshore wind. Onshore wind investment is limited to 30 GW. The peak demand is 65 GW according to the NESO's FES projections \cite{eso_future_2024}. The price cap is assumed to be 20,000 \$/MWh, and the flexible demand is 2 GW in all time steps. Set $\Omega$ comprises 60 equiprobable scenarios that combine 20 weather scenarios with three scenarios for short-run variable costs of gas power plants (78, 150, 220) \$/MWh, reflecting both fuel price and carbon cost variability. The model horizon is a year with 1440 time steps of varying duration per scenario, obtained using the technique of \cite{pineda_chronological_2018}. 

We model investments in LDES by a risk-averse investor who computes the CVaR at a tail probability of $\Psi=0.2$. To assess the impact of risk aversion on our results, we vary the weight the investor assigns to the expected value ($\delta_r$). For batteries and renewables, we assume that investors are risk-neutral ($\delta_r =1$). Because both technologies have been deployed at scale in the UK, we assume that existing financial markets and government support mechanisms adequately de-risk investments in these technologies, which may also access revenue streams from ancillary services not explicitly modeled here. Consumers are assumed to be risk-averse, with equal weight assigned to the expected value and CVaR.

Contracts facilitate risk sharing between the risk-averse LDES investor and consumers through the Cap-and-Floor (C\&F) mechanism described in Section~\ref{sec:StorageContracts}. In the case study, we evaluate a range of parametrizations by varying the cap rate of return \(i_r\) from 10\% to 15\% and the floor rate of return from 2\% to 7\%. For each cap--floor pair, we consider both hard designs (no sharing) and soft designs with 10\% sharing, i.e., \(\alpha^C=\alpha^F=0.10\). Techno-economic assumptions about technologies are presented in Table \ref{tab:table1}.

\begin{table}[!htbp]
\centering
\caption{Techno-economic Parameters of Considered Technologies}
\label{tab:table1}
\setlength{\tabcolsep}{2pt}
\scriptsize
\begin{tabular}{lccccc}
\hline
\textbf{Parameter} & \rotatebox{0}{\parbox{1cm}{\centering Solar}} & \rotatebox{0}{\parbox{1cm}{\centering Onshore\\Wind}} & \rotatebox{0}{\parbox{1cm}{\centering Offshore\\Wind}} & \rotatebox{0}{\parbox{1cm}{\centering Battery (2h)}} & \rotatebox{0}{\parbox{1cm}{\centering LDES (12h)}} \\ \hline
Capex (\$/kW) & 1000 & 1400 & 3500 & 600 & 4500 \\
FOM (\$/kW-yr) & 15 & 28 & 50 & 33 & 20.2 \\
Charge Eff. (\%) & - & - & - & 92 & 80 \\
Discharge Eff. (\%) & - & - & - & 92 & 80 \\
Lifetime (years) & 25 & 30 & 30 & 25 & 50 \\
WACC (\%) & 6.2 & 6.2 & 7.7 & 6.5 & 7.1 \\
\hline
\end{tabular}

\vspace{0.3em}
\noindent\scriptsize{Note: For all technologies, the debt share is assumed to be 70\%, and the tax rate is assumed to be 25\%. The weighted average cost of capital (WACC) reflects the blended cost of debt and equity financing. The interest rate (for LDES assumed 8\%) corresponds to the pre-tax cost of debt, while the return on equity reflects the required return to equity investors (for LDES assumed 9.8\%), which is typically higher due to their greater exposure to risk.}
\end{table}

A commonly used benchmark in incomplete-market settings is a complete market \cite{dimanchev_consequences_2024, mays_asymmetric_2019}. Under certain conditions (e.g., uniform (across all agents) risk parameters), the complete market problem can be reformulated as a single optimization problem \cite{ralph_risk_2015}. In our case, some agents are risk-neutral and others risk-averse, and the probabilities used by consumers are slightly different than those used by investors (due to \eqref{EQ: Normalization}). Following Corollary 3 and Example 3.1 of Ralph and Smeers \cite{ralph_pricing_2011}, we note that in our case, the risk set of the risk-neutral agent is a subset of the risk set of other agents. Consequently, the complete-market solution collapses to the solution of a single risk-neutral capacity expansion problem with probabilities $p_\omega$, which we subsequently use as a benchmark. 

Investment performance is evaluated using the internal rate of return (IRR), defined as the discount rate satisfying $\sum_{n=1}^{L_r} CF_r/(1+\text{IRR}_r)^n = C_r^I$ for a constant annual cash flow $CF_r$ and lifetime $L_r$. Scenario-specific values $\text{IRR}_{r\omega}$ are computed from second-stage net revenues $\pi_{r\omega}$, and average performance is summarized by $\text{IRR}_r(\mathbb{E}[CF_r(\omega)], L_r, C_r^I)$, used to evaluate the implied risk premium relative to a benchmark risk-free rate \cite{ehrenmann_generation_2011, mays_financial_2023}.

\section{Results} \label{SEC:Results}

\subsection{Investments under fully incomplete markets} \label{sec: Res1}

Fig. \ref{FIG:RN_complete_comparison} presents the risk-neutral mix and the change in investment as the risk-aversion parameter $\delta$ for the LDES investor is varied. The risk-neutral equilibrium serves as the efficient benchmark against which the effects of market incompleteness are evaluated. The risk-neutral mix has both LDES and batteries, with the former's power capacity being higher.\footnote{We note that the clustering method we use \cite{pineda_chronological_2018}, while suitable for modeling LDES, can smooth daily variability and, as such, underestimate the value of batteries and overestimate the value of solar power \cite{mannhardt_accurately_2025}.} As $\delta$ decreases (and the investor is more risk-averse), the installed capacity of LDES declines, eventually disappearing from the mix for $\delta<0.55$, with batteries and new solar capacity filling the gap.\footnote{These changes in the mix may affect the business models of other investors, represented as risk-neutral in this study. Evaluation of shifts in these business models is beyond the scope of this work.} In both cases, the substitution is also associated with a change in the system's total installed energy storage capacity. For the subsequent analysis of the cap-and-floor mechanism in this article, we set $\delta = 0.6$ to reflect limited LDES investment in the absence of de-risking mechanisms.

\begin{figure}[!htbp]
    \centering
    \includegraphics[width=\columnwidth]{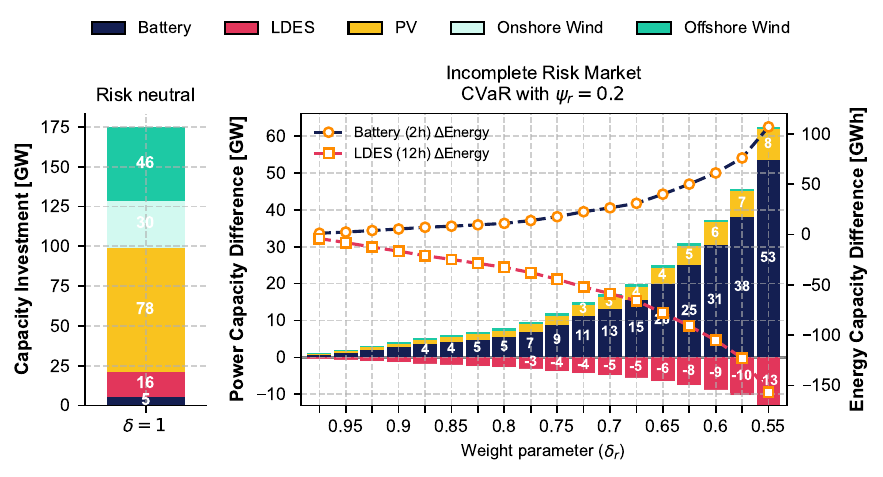}
    \caption{(left): Equilibrium capacity mix by technology under risk neutrality. (right): Capacity deviations from the risk-neutral benchmark as LDES risk aversion increases ($\delta$ from 1 to 0.55); positive values indicate higher capacity than the benchmark, negative values indicate lower. Bars denote power capacity; lines denote storage energy capacity.}
    \label{FIG:RN_complete_comparison}
\end{figure}

The observed substitution of LDES for batteries as risk aversion increases suggests that LDES policies or dedicated mechanisms will have secondary effects on storage choices. In the absence of policies that de-risk LDES investments, battery investments might exceed the risk-neutral benchmark, while excessive support for LDES could push them below it.

\subsection{LDES return distributions}

The left panel of Fig. \ref{FIG:LDES_net_revenue_cdf_adjusted} shows the distributions of scenario-specific IRRs for LDES with changing risk aversion. At the risk-neutral equilibrium ($\delta=1$), the IRR distribution is clearly skewed towards negative values, and the expected IRR equals the risk-free WACC (7.1\%).  A risk-averse investor can manage risk either by adjusting its investment or hedging its exposure through contracts. When it can only do the former, it reduces its investment, thereby increasing prices and raising returns in most scenarios to compensate for the weight placed on tail losses. The more pronounced the loss tail, the greater the potential impact of risk aversion on the cost of capital. In this case, the average IRR is 9.8\% under $\delta=0.6$, which is almost 3\% higher than the risk-free WACC. This increase in the implicit cost of capital makes investment in LDES less attractive to consumers, shifting the competitive mix towards other technologies that would appear less cost-effective if all investors were risk-neutral.

\begin{figure}[!htbp]
    \centering
    \includegraphics[width=\columnwidth]{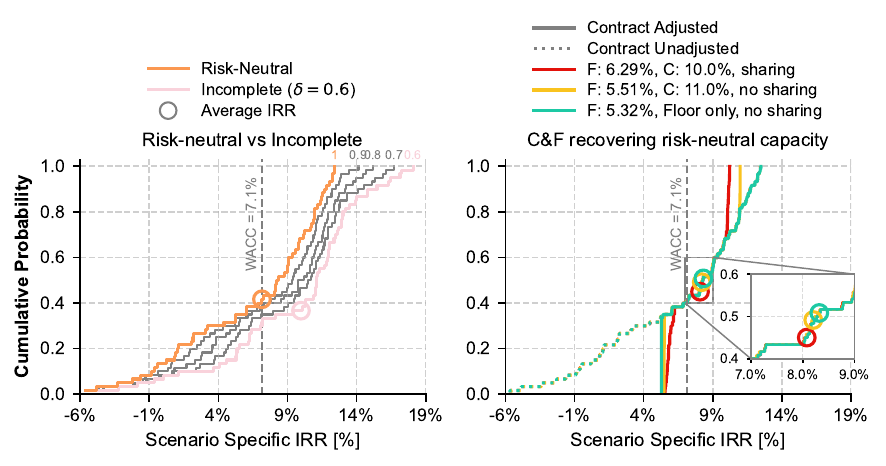}
    \caption{Empirical CDFs of scenario-specific IRRs for LDES: (left) under varying levels of risk aversion and (right) for selected C\&F parametrizations at $\delta = 0.6$ that support risk-neutral capacity under zero-premium contracts. Colors distinguish contract types and benchmarks; line styles distinguish contract-adjusted from unadjusted distributions.}
    \label{FIG:LDES_net_revenue_cdf_adjusted}
\end{figure}

Introducing a revenue floor truncates the lower tail of the distribution, reducing downside risk (increasing CVaR) and allowing the investor to accept a lower average return. The opposite is true when the cap binds: it limits upside potential, shifting the distribution to the left and discouraging investment. In general, high floors need to be paired with low caps to achieve a target level of investment (see right panel of Fig. \ref{FIG:LDES_net_revenue_cdf_adjusted}). Considering that the return distribution is negatively skewed and asymmetric, symmetric cap and floor levels around the risk-free WACC are generally not optimal.

For the same level of invested capacity, a tighter collar reduces the average IRR (zoom-in box in Fig. \ref{FIG:LDES_net_revenue_cdf_adjusted}). This de-risking (reflected as a reduction in average IRR) is one of the goals of C\&F schemes. However, the tighter the collar, the more frequently the cap and floor are binding, potentially weakening the LDES's incentive to respond to wholesale market signals. 

The feedback between contract design, market equilibrium, and investor returns, illustrated in Fig. \ref{FIG:LDES_net_revenue_cdf_adjusted}, cannot be captured by static financial assessments that treat prices or revenues as exogenous. The availability of contracts influences investment decisions, which in turn reshape electricity prices and the distribution of project revenues. So,  LDES investors considering the C\&F scheme must also project their revenue distributions. That's why clear and credible information about future market design and contract rules is important if Cap-and-Floor mechanisms are to reduce financing costs effectively.

\subsection{Impact of cap-and-floor on LDES investment}

The introduction of administered, zero-premium C\&F contracts substantially increases equilibrium investment in LDES by reducing investors' exposure to downside revenue risk. As shown in the left panel of Fig.~\ref{FIG:capacity_risk_zeroCF}, LDES capacity increases monotonically with the floor rate of return. A higher floor reduces the likelihood and severity of low-revenue outcomes, lowering the compensation investors require for bearing risk. Consequently, the required risk premium declines as the floor increases (right panel of Fig.~\ref{FIG:capacity_risk_zeroCF}), stimulating additional investment. The risk premium falls from approximately 2.8\% (incomplete market) to almost zero when the floor approaches the risk-free WACC. At this point, the investor is effectively fully insured against downside risk and invests up to the maximum capacity permitted by the C\&F mechanism.

\begin{figure}[!htbp]
\centering
\includegraphics[width=\columnwidth]{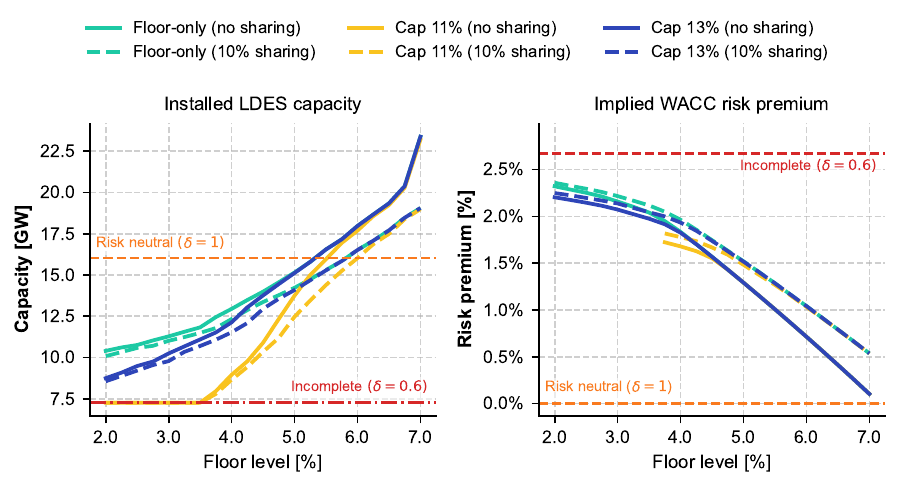}
\caption{(left) LDES capacity and (right) risk premium versus floor rate for cap-and-floor configurations at $\delta = 0.6$.}
\label{FIG:capacity_risk_zeroCF}
\end{figure}

The effect of the cap is qualitatively different. For a given floor, a tighter cap generally reduces equilibrium LDES investment by limiting the project's upside potential and the rate of return the investor can obtain. This creates a trade-off between investment incentives and risk reduction. For example, at a 4\% floor, reducing the cap from 13\% to 11\% results in lower equilibrium investment despite a lower risk premium. This behavior highlights the importance of equilibrium modeling. A central-planner approach that assumes the cost of capital exogenously would likely predict higher investment whenever financing costs fall, whereas in the stochastic equilibrium model, the impact of the contract on the revenue distribution is endogenously captured.

The interaction between the floor and the cap is also noteworthy. As the floor increases, the influence of the cap diminishes, as indicated by the convergence of the investment curves in Fig.~\ref{FIG:capacity_risk_zeroCF}. A higher floor shifts the revenue distribution, making the cap bind less frequently and therefore reducing its effect on both investment and risk. Conversely, combinations of a low floor and a tight cap may fail to provide sufficient compensation for downside risk, making participation unattractive and reducing investment to levels close to those observed under incomplete markets. In this case study, this result is observed for floors below 3.5\% and an 11\% cap. 

Introducing a revenue-sharing factor weakens the de-risking effect of the C\&F mechanism. Relative to hard contracts, soft contracts with revenue sharing leave investors partially exposed to revenue fluctuations outside the cap and floor, increasing the required risk premium and reducing equilibrium investment. The magnitude of this effect depends on the contract parameterization: it is most pronounced when either the cap or the floor binds frequently and becomes relatively small when both limits are rarely active.

Overall, the results demonstrate that the floor, cap, and sharing parameters cannot be considered independently. Multiple combinations of floor, cap, and sharing parameters can deliver similar equilibrium investment while implying different levels of investor risk and different transfers between consumers and LDES investors. Restoring the risk-neutral level of investment, therefore, requires progressively higher floor levels as either the sharing factor increases or the cap becomes more restrictive. These strong interactions suggest that Cap-and-Floor contracts should be calibrated as an integrated package rather than through sequential adjustment of individual parameters.

\subsection{Impact of cap-and-floor on system and consumers}
Cap-and-Floor (C\&F) contracts redistribute revenues between consumers and LDES investors. When market revenues fall below the floor, consumers compensate investors for the shortfall, whereas revenues above the cap are returned by investors to consumers.

Fig.~\ref{FIG:horizontal_zeroCF} presents contract payouts and changes in social and consumer surplus. At the contract parameterizations that restore the risk-neutral level of LDES investment, consumers make expected net transfers of $\sim$\$0.7--0.8B to LDES investors. These transfers are similar to the system-wide benefits of increased LDES deployment. Relative to the incomplete-market equilibrium, social welfare increases by $\sim$\$0.6B, while consumers capture substantially larger net benefits of $\sim$\$7.6-7.8B. The large consumer surplus effect reflects both lower system costs and a redistribution of rents from fixed-capacity technologies (nuclear and unabated gas).\footnote{Depending on the regulatory framework, energy revenues may also accrue to consumers, resulting in reduced improvements in consumer surplus due to changes in the energy revenues of fixed-capacity technologies.} 
These results indicate that consumers have the most to lose from incomplete markets and, in expectation, are better off funding C\&F contracts because the benefits of increased LDES investment exceed the expected contract payouts.

The introduction of C\&F contracts also affects competing investment opportunities. In equilibrium, all candidate technologies without a cap continue to earn their required rate of return. Technologies with capped capacity earn more than the required rate of return, but their markup reduces as LDES capacity increases. Overall, no candidate technology becomes unprofitable. Instead, contract-induced changes in investment volumes and scarcity rents can lead some technologies to earn lower absolute profits. 

Fig.~\ref{FIG:horizontal_zeroCF} further shows that the utility of risk-averse consumers increases monotonically with LDES deployment, even beyond the risk-neutral investment level. This reflects consumers' preference for outcomes with less severe downside welfare realizations: additional LDES improves the lower tail of the welfare distribution. By contrast, when welfare is evaluated by a risk-neutral scheme administrator, the additional gains from investment beyond the risk-neutral capacity are insufficient to compensate for the larger expected contract payouts. This divergence highlights that the preferred capacity level depends on whether the objective is to maximize expected welfare or risk-adjusted welfare, given risk-averse consumers.

\begin{figure*}[!htbp]
\centering
\includegraphics[width=2\columnwidth]{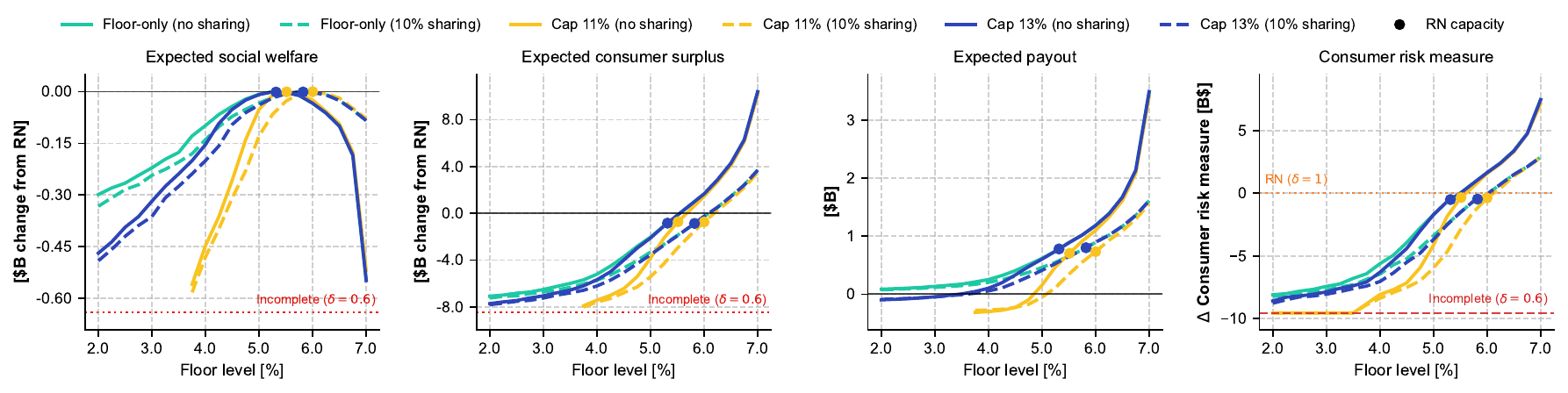}
\caption{{(left) change in the expected social welfare from the risk-neutral level; (second from left): change in the expected consumer welfare from the risk-neutral level; (second from right): expected total payouts. (right) risk-adjusted consumer welfare changes from the risk-neutral baseline. All figures are for selected LDES cap-and-floor designs at $\delta = 0.6$ and zero-premium C\&F. Markers indicate parametrizations recovering risk-neutral capacity.}}
\label{FIG:horizontal_zeroCF}
\end{figure*}

\subsection{Negotiated contracts} \label{sec: negotiated_con}
Under negotiated contracts, equilibrium LDES capacity continues to increase with the floor rate of return (left panel of Fig.~\ref{FIG:capacity_risk_negotiatedCF}), but the sensitivity of investment to the contract design is substantially weaker than under zero-premium contracts. Across most parameterizations, equilibrium capacity clusters around a similar level, indicating that negotiated premiums absorb much of the variation that cap, floor, and sharing parameters would otherwise introduce.

The relationship between the cap and investment is also fundamentally different from that observed under zero-premium contracts. Whereas tighter caps reduced investment when contracts were provided at zero contract premium, they now increase equilibrium LDES capacity. This reversal occurs because tighter caps increase the value of the insurance provided by the contract to consumers, thereby increasing the premium consumers are willing to pay (or, equivalently, lowering the premium they are willing to accept). The resulting transfer compensates investors for the reduction in upside revenues, allowing additional investment.

The right panel of Fig.~\ref{FIG:capacity_risk_negotiatedCF} shows that negotiated contracts continue to reduce investors' required risk premia as the floor increases, albeit to a much smaller extent than under zero-premium contracts. Unlike the zero-premium case, the equilibrium risk premium now depends strongly on the cap level, with tighter caps resulting in lower required returns because negotiated premiums can also compensate investors for revenue risk.

\begin{figure}[!htbp]
\centering
\includegraphics[width=\columnwidth]{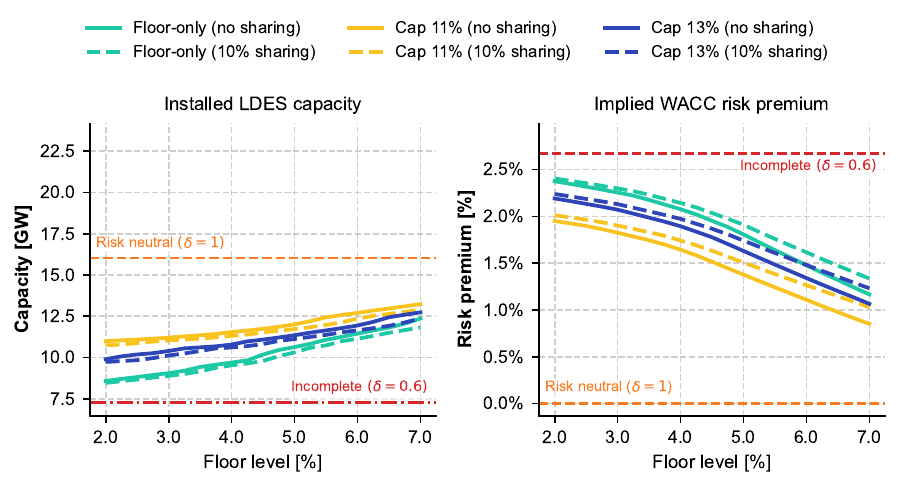}
\caption{(left) LDES capacity and (right) risk premium versus floor rate for cap-and-floor configurations at $\delta = 0.6$ for negotiated contracts.}
\label{FIG:capacity_risk_negotiatedCF}
\end{figure}

Despite reducing financing costs, negotiated contracts generally fail to restore LDES investment to its risk-neutral level. Fig. ~\ref{FIG:horizontal_negotiatedCF} illustrates why. In contrast to zero-premium contracts, negotiated contracts generate almost no expected transfers between consumers and investors. Consumers value the contracts only for the risk reduction they provide at the prevailing level of LDES capacity and negotiate premiums accordingly. They do not internalize that a more generous premium (or even a net expected transfer) would induce additional LDES investment, which would subsequently reduce electricity prices, improve resource adequacy, and increase consumer welfare. 

The resulting equilibrium is therefore characterized by a coordination failure. There exists a combination of contract premium and LDES investment that satisfies investors' zero risk-adjusted profit condition while simultaneously delivering higher consumer surplus than the negotiated outcome. However, because bargaining is conducted taking LDES capacity as given, this mutually beneficial equilibrium is never reached. The contract market prices financial risk but does not fully internalize the consumer benefits associated with additional LDES investment.

\begin{figure*}[!htbp]
\centering
\includegraphics[width=2\columnwidth]
{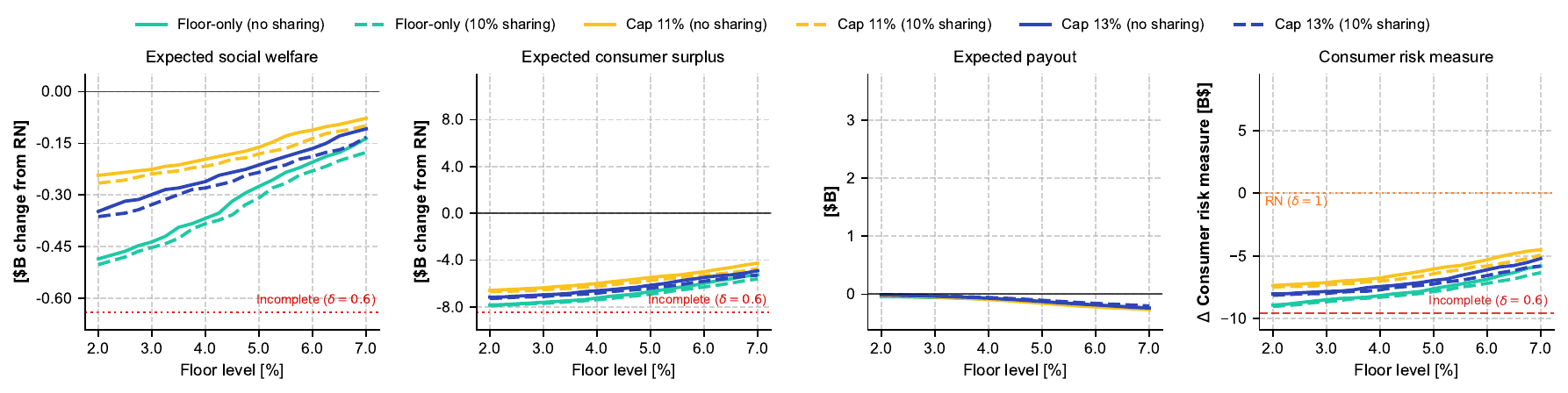}
\caption{{(left) change in expected social welfare from the risk-neutral level; (second from left) change in expected consumer welfare from the risk-neutral level; (second from right) expected total payouts. (right) risk-adjusted consumer welfare changes from the risk-neutral baseline. All figures are for selected LDES cap-and-floor designs at $\delta = 0.6$ and negotiated-premium C\&F.}}
\label{FIG:horizontal_negotiatedCF}
\end{figure*}

Fig.~\ref{fig:threshold_prices_vs_floor_normalized_delta_0} illustrates this coordination failure by comparing the range of premiums acceptable to consumers and investors for the investment levels obtained under zero-premium contracts. For floor-only contracts, consumers always require a positive premium because transfers occur only from consumers to investors. Introducing a cap, e.g., 11\% in this case study, creates bidirectional transfers, increasing the value of the contract to consumers and reducing the premium required for agreement. The point at which the negotiated premium crosses zero corresponds to the parameterization where negotiated and zero-premium contracts support the same LDES investment.

\begin{figure}[!htbp]
    \centering
    \includegraphics[width=0.95\linewidth]{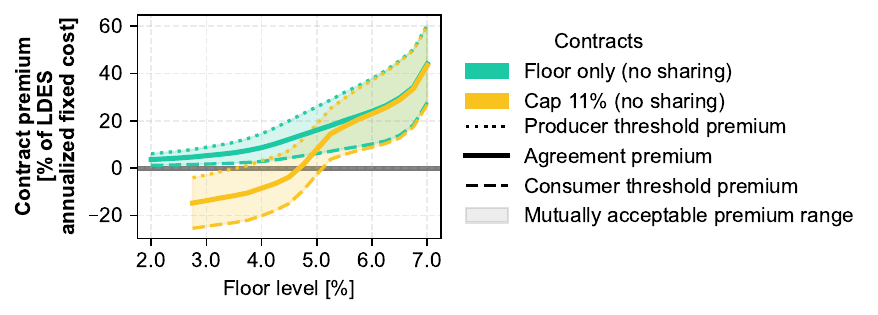}
    \caption{Mutually acceptable contract premium range for LDES capacity supported by zero-premium C\&F contracts at $\delta = 0.6$.}    \label{fig:threshold_prices_vs_floor_normalized_delta_0}
\end{figure}

Because negotiated premiums fail to internalize the investment externality, bargaining power becomes an important determinant of investment outcomes. Fig.~\ref{fig:capacity_and_cw_risk_diff_price_selection_envelopes} shows the equilibrium LDES capacity and risk-adjusted consumer welfare across the full range of mutually acceptable premiums. Premiums closer to the consumer participation threshold induce greater LDES investment and higher consumer welfare by providing stronger incentives for investment. Conversely, premiums close to the investor participation threshold recover the incomplete-market equilibrium, with negligible additional investment and limited welfare improvements. Thus, although all negotiated premiums satisfy individual rationality, they need not be socially efficient.

\begin{figure}[!htbp]
    \centering
    \includegraphics[width=\linewidth]{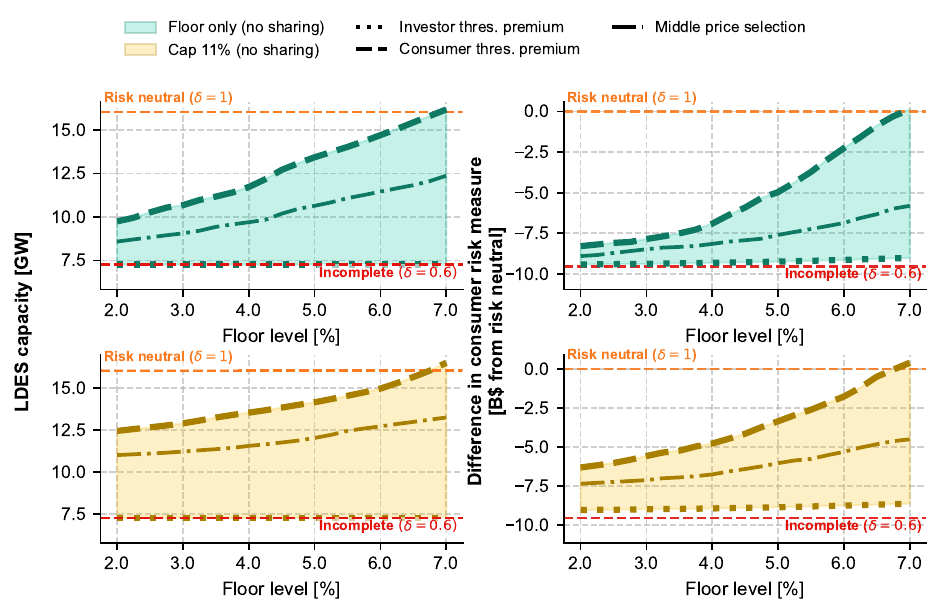}
    \caption{Equilibrium LDES capacity and consumer risk measure under negotiated contracts with contract premium lying inside the mutually acceptable range (between the consumer and producer threshold premiums).}
    
    \label{fig:capacity_and_cw_risk_diff_price_selection_envelopes}
\end{figure}

\section{Conclusions} \label{SEC:Conclusions}
This article investigates the effectiveness of Cap-and-Floor contracts for long-duration energy storage in electricity markets with incomplete financial markets. Using a stochastic equilibrium investment model with endogenous contract pricing, we evaluate how alternative cap-and-floor contract designs affect investment incentives, consumer welfare, and risk sharing between investors and consumers.

Our numerical results, for a stylized future Great British power system, show that market incompleteness substantially suppresses LDES investment by increasing investors' required risk premia.
We also show that the projected distribution of LDES revenues lies at the heart of Cap-and-Floor design, highlighting the importance of credible assumptions about future market design and operating rules when evaluating or implementing such schemes.

We find that centrally administered, zero-premium cap-and-floor contracts can effectively incentivize LDES investment by reducing downside revenue risk. However, the floor, cap, and revenue-sharing parameters interact strongly and therefore cannot be selected independently. Multiple combinations of these design parameters can deliver similar investment outcomes while implying different financing costs, consumer transfers, and operational incentives. Tight collars generally lead to lower financial risk premiums, but may weaken the asset's incentive to respond to wholesale market signals, requiring more sophisticated performance-incentive mechanisms to ensure efficient operation and avoid moral hazard. 

A second trade-off emerges when comparing zero-premium centrally administered contracts to contracts with negotiated premiums. Negotiated contracts achieve near-zero expected transfers between consumers and investors but generally fail to restore the risk-neutral investment level because decentralized bargaining does not account for the additional consumer benefits associated with higher LDES investment.

Overall, our results suggest that both contract design and institutional design play a central role in determining the effectiveness of cap-and-floor mechanisms for LDES. Policymakers should consider both to balance effects on investment incentives, consumer welfare, and the efficiency of risk sharing. Future work could extend this analysis by considering multiple LDES technologies, short-term uncertainty, and operating obligations for assets supported by cap-and-floor schemes.

\section*{Appendix 1: Nash Bargaining Solution to Contract Pricing}

Let $\rho$ be a coherent risk measure. Parties $A, B$ negotiate premium $P$ for a contract. Let $\rho_i^0, \rho_i^1$ denote baseline and post-contract risk measures. By translation invariance:
\begin{gather}
\rho_A(X_A^1 + P) = \rho_A^1 + P, \;\;
\rho_B(X_B^1 - P) = \rho_B^1 - P
\end{gather}

Define surpluses $g_A(P) = \rho_A^0 - \rho_A^1 - P =: \bar{P} - P$ and $g_B(P) = \rho_B^0 - \rho_B^1 + P =: P - \underline{P}$, where:
$\bar{P} := \rho_A^0 - \rho_A^1$, $\underline{P} := \rho_B^1 - \rho_B^0$

\begin{prop}
Assume $\underline{P} < \bar{P}$. The contract premium solution to Nash bargaining is $P^* = \frac{\bar{P} + \underline{P}}{2}$.
\end{prop}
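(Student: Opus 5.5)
The plan is to write the Nash bargaining problem explicitly in terms of the surplus functions $g_A$ and $g_B$ defined just above, and then maximize their product over the set of premiums that both parties accept. In Nash's formulation the disagreement point is ``no contract,'' which yields the baseline risk measures $\rho_A^0,\rho_B^0$. Each party's gain from agreeing at premium $P$ is therefore its post-contract risk measure minus its baseline. By the translation-invariance identities stated before the proposition, these gains are exactly $g_A(P)=\bar P-P$ and $g_B(P)=P-\underline P$. Both gains are affine in $P$, because a constant premium shifts a coherent risk measure one-for-one.

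The first step is to identify the feasible bargaining set. This is the interval of premiums that are individually rational for both parties, $\{P: g_A(P)\ge 0,\ g_B(P)\ge 0\}=[\underline P,\bar P]$. The assumption $\underline P<\bar P$ guarantees that this interval is nondegenerate, so there are agreements that strictly benefit both sides. Matching this interval to \eqref{EQ:PriceThesholdsInvestor}--\eqref{EQ:PriceThesholdsConsumer} is only notational: $\bar P$ plays the role of $\bar\phi_{rk}$ and $\underline P$ the role of $\underline\phi_{ck}$.

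The second step is to maximize the Nash product
\[
N(P)=g_A(P)\,g_B(P)=(\bar P-P)(P-\underline P), \qquad P\in[\underline P,\bar P].
\]
$N$ is a concave quadratic in $P$, with leading coefficient $-1$. It vanishes at both endpoints and is strictly positive in the interior. Setting $N'(P)=\bar P+\underline P-2P=0$ gives the unique critical point $P^*=(\bar P+\underline P)/2$. This point lies in the interior of the interval, and strict concavity makes it the unique global maximizer. The result is therefore the Nash solution, and it splits the total gain $\bar P-\underline P$ equally, with each party receiving $(\bar P-\underline P)/2$.

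The main obstacle is not the optimization but justifying the setup. One must check that the bargaining problem is well posed in Nash's sense, with a convex, compact feasible utility set containing a point that strictly dominates the disagreement point. That holds here because the utility possibility frontier $g_A+g_B=\bar P-\underline P$ is a line segment in the positive orthant. One must also check that treating risk-adjusted measures as the parties' utilities is legitimate, and this is where coherence, specifically translation invariance, is essential. I would state explicitly that, with capacity and contract volume held fixed under Assumption~2, the only negotiable variable is the scalar premium $P$. The problem then reduces to the one-dimensional maximization above, with symmetric bargaining weights.
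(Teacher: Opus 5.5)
Your proposal is correct and follows the same route as the paper: maximize the Nash product $(\bar P-P)(P-\underline P)$ and solve the first-order condition to get $P^*=(\bar P+\underline P)/2$. You also check that the quadratic is strictly concave and that the critical point lies in the interior of the individually rational interval $[\underline P,\bar P]$; the paper's first-order-condition argument leaves both implicit, so your version is a more complete write-up of the same proof.
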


\begin{proof}
Maximize $g_A(P) \cdot g_B(P) = (\bar{P} - P)(P - \underline{P})$. First-order condition:
\begin{gather}
\frac{d}{dP}[(\bar{P} - P)(P - \underline{P})] = 0
\end{gather}
Solving for P:     $P^* = \frac{\bar{P} + \underline{P}}{2}$
\end{proof}
\vspace{-0.65\baselineskip}
\bibliographystyle{IEEEtran}
\bibliography{references, mybib}
\vfill

\end{document}